\pdfoutput=1
\documentclass[sigconf,nonacm]{acmart}

\usepackage{amsthm}
\usepackage{tikz}
\usetikzlibrary{patterns}
\usepackage{subcaption}
\usepackage{cleveref}
\usepackage{algorithm}
\usepackage{algpseudocode}
\usepackage{float}

\newcommand{\gapLo}[1]{\MinDist^2(\check{O}_{#1}, B_{#1}) - \MaxDist^2(\check{O}_{#1}, E_{#1})}
\newcommand{\gapHi}[1]{\MinDist^2(\hat{O}_{#1}, B_{#1}) - \MaxDist^2(\hat{O}_{#1}, E_{#1})}

\DeclareMathOperator{\Corners}{Corners}
\DeclareMathOperator{\MaxDist}{MaxDist}
\DeclareMathOperator{\MinDist}{MinDist}
\DeclareMathOperator{\dist}{dist}
\DeclareMathOperator*{\argmin}{\arg\!\min}
\DeclareMathOperator*{\argmax}{\arg\!\max}

\begin{document}

\title{Optimal Bounds-Only Pruning for Spatial AkNN Joins}

\author{Dominik Winecki}
\email{winecki.1@osu.edu}
\orcid{0009-0009-8632-3102}
\affiliation{
  \institution{The Ohio State University}
  \city{Columbus}
  \state{Ohio}
  \country{USA}
}

\begin{abstract}
We propose a bounds-only pruning test for exact Euclidean AkNN joins on partitioned spatial datasets.
Data warehouses commonly partition large tables and store row group statistics for them to accelerate searches and joins, rather than maintaining indexes.
AkNN joins can benefit from such statistics by constructing bounds and localizing join evaluations to a few partitions before loading them to build spatial indexes.
Existing pruning methods are overly conservative for bounds-only spatial data because they do not fully capture its directional semantics, thereby missing opportunities to skip unneeded partitions at the earliest stages of a join.
We propose a three-bound proximity test to determine whether \textit{all points} within a partition have a closer neighbor in one partition than in another, potentially occluded partition.
We show that our algorithm is both optimal and efficient.
\end{abstract}

\keywords{nearest neighbors, aknn join, spatial database, data warehouse}

\maketitle

\section{Introduction}

K-Nearest-Neighbor (kNN) searches are ubiquitous operations in spatial data analysis.
Given their significance and simplicity, K-Nearest-Neighbor and All-K-Nearest-Neighbor (AkNN) joins have been thoroughly studied.
Most of this work focuses on building and searching spatial indexes; executing AkNN joins on unindexed or minimally-indexed datasets has received less study.

We focus on executing AkNN joins on unindexed partitioned spatial datasets.
For example, in the data warehouse architecture, large tables are split into many partitioned files in an object store.
Such a pattern has also gained traction at a smaller scale, such as using DuckDB to query a directory of Parquet files on a laptop.
This style of data analysis has gained popularity for the performance benefits of avoiding index maintenance on inserts and updates~\cite{10.1145/3226595.3226638} and for pragmatic reasons, such as the portability and interoperability of the storage layer and the ability to download and query datasets without preprocessing.
Indexes are still used, but are created only temporarily for use during a running query.
Some metadata is available in the serialization layer to accelerate such queries via row-group statistics.
For example, Parquet stores minimum and maximum values for each row group, allowing joins to skip scans on row groups or partitions by pushing down filter operations.
We focus on AkNN queries across spatial datasets in these partitioned data storage architectures, where aggregation statistics metadata are available, and there is a high cost for missing an opportunity to prune: reading, decompressing, and inserting thousands of rows into an index.

Partition pruning can be used as the first stage of an AkNN join, pruning partitions before reading any rows or constructing indexes.
Then, a subsequent algorithm can be used to execute an AkNN on the underlying partitions.
Existing kNN pruning methods are suboptimal for this first stage; they fail to capture directionality by constructing a single distance range from two bounds, resulting in unnecessarily pessimistic pruning.
Informally, when there is a partition ``in between'' the origin partition and a partition further out, the middle partition allows pruning of the further out partition.

We propose a simple three-bound test to determine if, for partitions $O$, $E$, and $B$, all points in $O$ are closer to all points in $E$ than to any point in $B$.
We prove this is the optimal way to prune an AkNN join when only partition bounds are considered.
Our test either asserts that a partition can be pruned or constructs example points demonstrating why a partition bound cannot be pruned.
This early-stage bounds-only pruning complements and is orthogonal to subsequent per-point AkNN, including approximate methods, since approximate methods cannot make assumptions about the distribution of unindexed data.
This makes our approach a simple, optimal, and pragmatic improvement for modern spatial data warehouses.

In this paper, we propose a three-bound test algorithm and prove its correctness and optimality.
We then provide an optimized version of our algorithm that reduces the runtime to $O(R)$, where $R$ is the number of dimensions, matching the information-theoretic lower bound.
Following this, we show that our bound proximity test constitutes a strict partial order, allowing us to determine the optimal order for loading partitions in the general case.

\section{Related Work}

\subsection{Spatial Data Warehouses}
OLAP database trends have prioritized data layout and compression over extensive use of indexes~\cite{10.1145/3226595.3226638,ziauddin2017dimensions}.
Increasingly higher-core-count CPUs and GPUs are becoming available, as are CPUs with SIMD instructions, prioritizing brute-force parallel processing over index traversal.
kNN joins on GPUs are faster on simple data layouts~\cite{velentzas_-memory_2020,attiogbe_gpu-based_2021,amdal_top-k_2020,gowanlock_hybrid_2021}.
Many relational OLAP systems using columnar stores maintain distribution information for groups of rows, notably Snowflake~\cite{dageville_snowflake_2016} and AWS Redshift.
These are sometimes referred to as zone maps~\cite{ziauddin2017dimensions}.
Similarly, many spatial data warehouses implement AkNN using a MapReduce~\cite{10.1145/1327452.1327492} execution architecture~\cite{lu_efficient_2012,zhang_efficient_2012}, often in Spark~\cite{maillo_knn-is_2017,shangguan_big_2017,tang_locationspark_2016,vu_incremental_2021,yu_geospark_2015}.
These systems use partition-aware data processing to efficiently query data in a shared disk/object storage layer.

\subsection{Nearest Neighbors}
Spatial nearest-neighbor, k-nearest-neighbor, and all-k-nearest-neighbor queries have been extensively studied for spatial use cases~\cite{eldawy_era_2016,pandey_how_2018}.
We constrain our focus to spatial low-dimensional exact AkNN with Euclidean/L2 distance.
Partition-bound pruning is the earliest stage of executing an AkNN join and is orthogonal to the per-point execution at a later stage, which may go beyond these constraints (e.g., approximate AkNN) but does not impact the earlier partition pruning.
Additionally, most kNN research focuses on creating spatially disjoint partitions~\cite{baiyou_qiao_boundary_2015,chatzimilioudis_distributed_2016}, such as quadtrees~\cite{tang_locationspark_2016}, but our use case often has highly overlapping partitions.
Similarly, most kNN pruning research focuses on point-level pruning against an index; however, in this application, we are strictly interested in pruning partitions before loading any points or constructing indexes.

\subsubsection{$k=1$ Point-to-Bound}

\begin{figure}[H]
    \centering
    \begin{tikzpicture}[>=stealth, line cap=round, line join=round]

  \fill (-1,-0.3) circle (2pt);

  \draw[thick] (2,-1) rectangle (3,1.5);

  \draw[<->, dashed] 
    (-1,-0.3) -- node[above]{MinDist} (2,-0.3);

  \draw[<->, dashed] 
    (-1,-0.3) -- node[sloped,above]{MaxDist} (3,1.5);

  \draw[<->, dashed] 
    (-1,-0.3) -- node[sloped,below]{MinMaxDist} (3,-1);

\end{tikzpicture}
    \caption{Point-to-Bound Distance Measures. Minimum and maximum distance to a partition shown, as well as the special case maximum distance for $k=1$ on an AABB (MinMaxDist).}
    \label{fig:maxdist_mindist}
\end{figure}
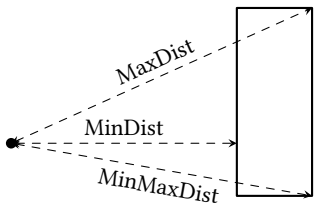

Seminal work on pruning nearest-neighbor queries by Roussopoulos et al. introduced querying via the $\MinDist$ and $\MaxDist$ functions~\cite{roussopoulos_nearest_1995}.
These functions both return distance bounds from a point to an AABB.
The MinMaxDist function can be used to find a tighter bound, but only if the bound is an MBR and $k=1$ (which also prevents usage when any predicates are needed).
Many efficient AkNN algorithms build upon Point-to-Bound pruning~\cite{xia_gorder_2004,xie_simba_2016}.

\subsubsection{Bound-to-Bound}

\begin{figure}[H]
    \centering
    \begin{tikzpicture}[>=stealth, line cap=round, line join=round]
      \draw[thick] (0,0) rectangle (2,2);
      \draw[thick] (4,1) rectangle (5,2);
      \draw[<->, dashed] (2,1.75) -- node[above]{BMinDist} (4,1.75);
      \draw[<->, dashed] (0,0) -- node[sloped,below]{BMaxDist} (5,2);
      \draw[<->, dashed] (0,0) -- node[sloped,above]{NXNDist} (4,2);
    \end{tikzpicture}
    \caption{Bound-to-Bound Distance Measures}
    \label{fig:pMinDist_pMaxDist}
\end{figure}
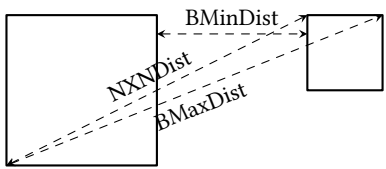

Methods that consider pruning strictly between bounds emerged later.
These were used by Sankaranarayanan et al.~\cite{sankaranarayanan_fast_2007} to define bound-to-bound versions of the min and max dist functions, henceforth referred to as BMinDist/BMaxDist, shown in \Cref{fig:pMinDist_pMaxDist}.
One could sort by BMaxDist and select partitions in this order until $k$ is saturated.
Then, only partitions with a BMinDist less than or equal to the saturating partition's BMaxDist (their PruneDist) must be searched.
They demonstrated that this is optimal for a two-partition test and that it has efficient implementations when using an index.

Additionally, the NXNDist~\cite{4221754} (short for MinMaxMinDist) is the bound-to-bound equivalent of MinMaxDist\footnote{MinMaxDist refers to the Roussopoulos et al. definition~\cite{roussopoulos_nearest_1995}, which differs from a distance metric by the same name in the NXNDist paper~\cite{4221754}, which, as they note, refers to the closest pairs definition~\cite{10.1145/342009.335414}, not a nearest neighbors definition. Similarly, their MinMinDist is our BMinDist and their MaxMaxDist is our BMaxDist.}.
Similar to its point-to-bound counterpart, NXNDist only works for $k=1$, with minimum bounds and no additional predicates.

\subsubsection{MBR/AABB Terminology}
Most kNN research uses the term Minimum Bounding Rectangles (MBRs).
However, Axis-Aligned Bounding Boxes (AABBs) are the more appropriate term for this paper since we require bounding boxes to be axis-aligned but do not require minimum bounds.
In most prior work, MBRs are actually AABBs; only the MinMaxDist~\cite{roussopoulos_nearest_1995} and NXNDist~\cite{4221754} metrics require minimum bounds.
Notably, omitting this minimum bounds requirement allows for pruning on datasets partitioned by a spatial hash, such as H3~\cite{noauthor_h3_2018}.
In the absence of row group distribution statistics, an AABB can be derived from a spatial hash, but not an MBR.

\section{Problem Statement}

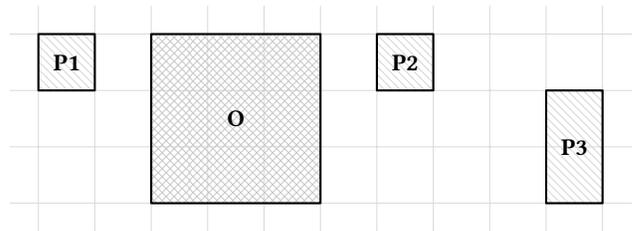
\begin{figure}
    \centering
    \begin{tikzpicture}[scale=0.75,>=stealth, line cap=round, line join=round]
        \begin{scope}
          \clip (-5.5, -0.5) rectangle (5.5, 3.5);
    
          \draw[
            step=1,
            lightgray,
            help lines,
            opacity=0.3
          ] (-6,-6) grid (6,6);
        \end{scope}
    
      \fill[pattern=crosshatch, pattern color=black!20]
           (-3,0) rectangle (0,3);
    
      \fill[pattern=north west lines, pattern color=black!15]
           (-5,2) rectangle (-4,3);      
      \fill[pattern=north west lines, pattern color=black!15]
           ( 1,2) rectangle ( 2,3);      
      \fill[pattern=north west lines, pattern color=black!15]
           ( 4,0) rectangle ( 5,2);      
    
        \draw[thick] (-5,2) rectangle (-4,3);  
        \draw[thick] (-3,0) rectangle (0,3);   
        \draw[thick] (1,2) rectangle (2,3);    
        \draw[thick] (4,0) rectangle (5,2);    
    
        \node at (-4.5, 2.5) {\textbf{P1}};
        \node at (-1.5, 1.5) {\textbf{O}};
        \node at ( 1.5, 2.5) {\textbf{P2}};
        \node at ( 4.5, 1.0) {\textbf{P3}};
    \end{tikzpicture}
    \caption{Spatial partition layout with three neighbor search partitions P1-P3. One partition, \textbf{P3}, can be pruned \textit{without visitation} because another partition, \textbf{P2}, will have all points closer to O. Only our proposed three-partition test identifies this.}
    \label{fig:direction_contradiction}
\end{figure}

Our task is to determine if all points in one bound are closer than all points in another bound within an AkNN join originating from a third bound:
$$\forall o \in O : \forall e \in E : \forall b \in B : \dist(o, e) < \dist(o, b)$$

Consider the partition layout in \Cref{fig:direction_contradiction}.
One dataset has a single partition, \textbf{O}; the neighbor candidate dataset has three partitions: \textbf{P1}, \textbf{P2}, and \textbf{P3}.
Which of these three partitions may contain the nearest neighbors to \textbf{O}?
Clearly, \textbf{P1} and \textbf{P2} must be considered, but \textbf{P3} is not a trivial case.
Intuitively, partition \textbf{P2} is roughly in between \textbf{O} and \textbf{P3}, which \textit{may} preclude the need to load \textbf{P3}.
The Bound-to-Bound test does not allow for removing \textbf{P3}.
In the bound-to-bound method, \textbf{P1} and \textbf{P2} are equidistant from \textbf{O} with no regard for \textbf{P2} being in the same direction as \textbf{P3}.
In the case of \Cref{fig:direction_contradiction}, this intuition is correct; \textbf{P3} does not need to be read. However, no existing spatial algorithms can show this prior to loading data from these partitions.

\section{AllPointsCloser Partition Pruning}\label{sec:pruning}

We are interested in identifying if one AABB partition $O$ (for origin) will always have a nearest neighbor in partition $E$ (for evaluation) instead of another partition $B$ (for basis):

$$\forall o \in O : \forall e \in E : \forall b \in B : \dist(o, e) < \dist(o, b)$$

We propose an equivalence in the \textit{All-Points Proximity Theorem}:
\begin{gather*}
    \left[\forall o' \in \Corners(O) : \MaxDist(o', E) < \MinDist(o', B)\right] \iff \\
    \forall o \in O : \forall e \in E : \forall b \in B : \dist(o, e) < \dist(o, b)
\end{gather*}

\begin{algorithm}[tbp]
    \caption{\textsc{AllPointsCloser}}
    \label{fig:algo}
    \begin{algorithmic}[1]
        \Statex \textbf{Input:} origin AABB $O$, evaluation AABB $E$, basis AABB $B$
        \Statex \textbf{Output:} \textbf{true} if all points in $O$ have nearest neighbor in $E$
        \Function{AllPointsCloser}{$O,E,B$}
            \ForAll{$p \in \Call{Corners}{O}$}
                \If{$\Call{MinDist}{p,B} \le \Call{MaxDist}{p,E}$}
                    \State \Return \textbf{false}
                \EndIf
            \EndFor
            \State \Return \textbf{true}
        \EndFunction
    \end{algorithmic}
\end{algorithm}

\Cref{fig:algo} directly performs this test.
Its correctness and optimality are evident from the previous if-and-only-if equivalence.
\Cref{fig:pruning_circles} shows a two-dimensional geometric interpretation of this distance test.
There are four circles, one centered at each corner, and each circle is only large enough to make contact with the nearest point in the basis partition.
The intersection of these circles forms a region which, from the perspective of the origin, is strictly closer than any points in the basis bound.
If at least $k$ points exist within this region, then the basis partition need not be searched.

\begin{figure}
    \centering
    \begin{tikzpicture}[scale=0.5,>=stealth,line cap=round,join=round]
        \draw[line width=1pt] (0,0) rectangle (4,4);

        \fill (0,0) circle (3pt);
        \fill (0,4) circle (3pt);
        \fill (4,0) circle (3pt);
        \fill (4,4) circle (3pt);
        \node[left]  at (0,0) {$O_1$};
        \node[right] at (4,0) {$O_2$};
        \node[right] at (4,4) {$O_3$};
        \node[left]  at (0,4) {$O_4$};

        \node at (2,2) {\large \textbf{O}};

        \draw[line width=1pt] (7,1) rectangle (8,2);

        \node at (7.5,1.5) {\large \textbf{B}};

        \draw (0,0) circle ({5*sqrt(2)});
        \draw (4,0) circle ({sqrt(10)});
        \draw (4,4) circle ({sqrt(13)});
        \draw (0,4) circle ({sqrt(53)});
        \begin{scope}
          \clip (0,0) circle ({5*sqrt(2)});
          \clip (4,0) circle ({sqrt(10)});
          \clip (4,4) circle ({sqrt(13)});
          \clip (0,4) circle ({sqrt(53)});
          \fill[pattern=north east lines] (-10,-10) rectangle (20,20);
        \end{scope}

        \draw[<->, dashed]
            (0,0) -- 
            node[midway, below, sloped] {\(\mathrm{MinDist}(O_1,B)\)}
            ++(225:{5*sqrt(2)});
        \draw[<->, dashed]
            (4,0) -- 
            node[midway, below, sloped] {\(\mathrm{MinDist}(O_2,B)\)}
            ++(315:{sqrt(10)});
        \draw[<->, dashed]
            (4,4) -- 
            node[midway, above, sloped] {\(\mathrm{MinDist}(O_3,B)\)}
            ++(45:{sqrt(13)});
        \draw[<->, dashed]
            (0,4) -- 
            node[midway, above, sloped] {\(\mathrm{MinDist}(O_4,B)\)}
            ++(135:{sqrt(53)});
    \end{tikzpicture}
    \vspace{-2mm}
    \caption{Geometric View of Pruning Test. Circles are drawn around each corner point of the origin partition to the nearest point in the basis partition; any geometry fully within their intersection must be closer than the basis partition.}
    \label{fig:pruning_circles}
\end{figure}
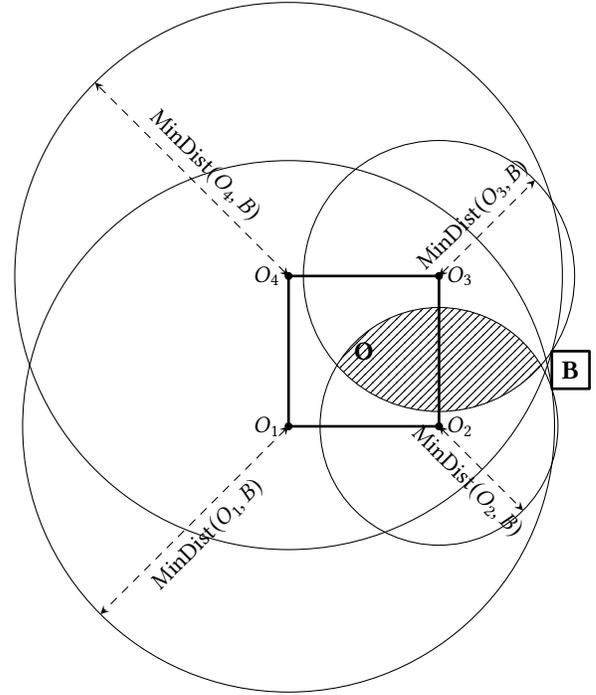

\subsection{All-Points Proximity Theorem}\label{sec:theorem_proof}

\subsubsection{Summary of Proof}
AABBs are convex sets in Euclidean space, and their corners represent the convex hull of their member points.
Therefore, we can represent any member point as an affine combination of the corners of an encompassing AABB.
We use the finite form of Jensen's inequality to constrain a convex function applied to all member points between its application on the corner points in \Cref{app_right}.
Specifically, we use the curve `MaxDist to $E$ subtracted by the MinDist to $B$', shown in \Cref{fig:img2}, which we prove is a convex function in \Cref{g_convex}.
Applying this convex difference-of-distances function to the convex hull is equivalent to evaluating all AABB member points.

\begin{figure}  
  \begin{subfigure}[b]{0.485\linewidth}
    \includegraphics[width=\linewidth]{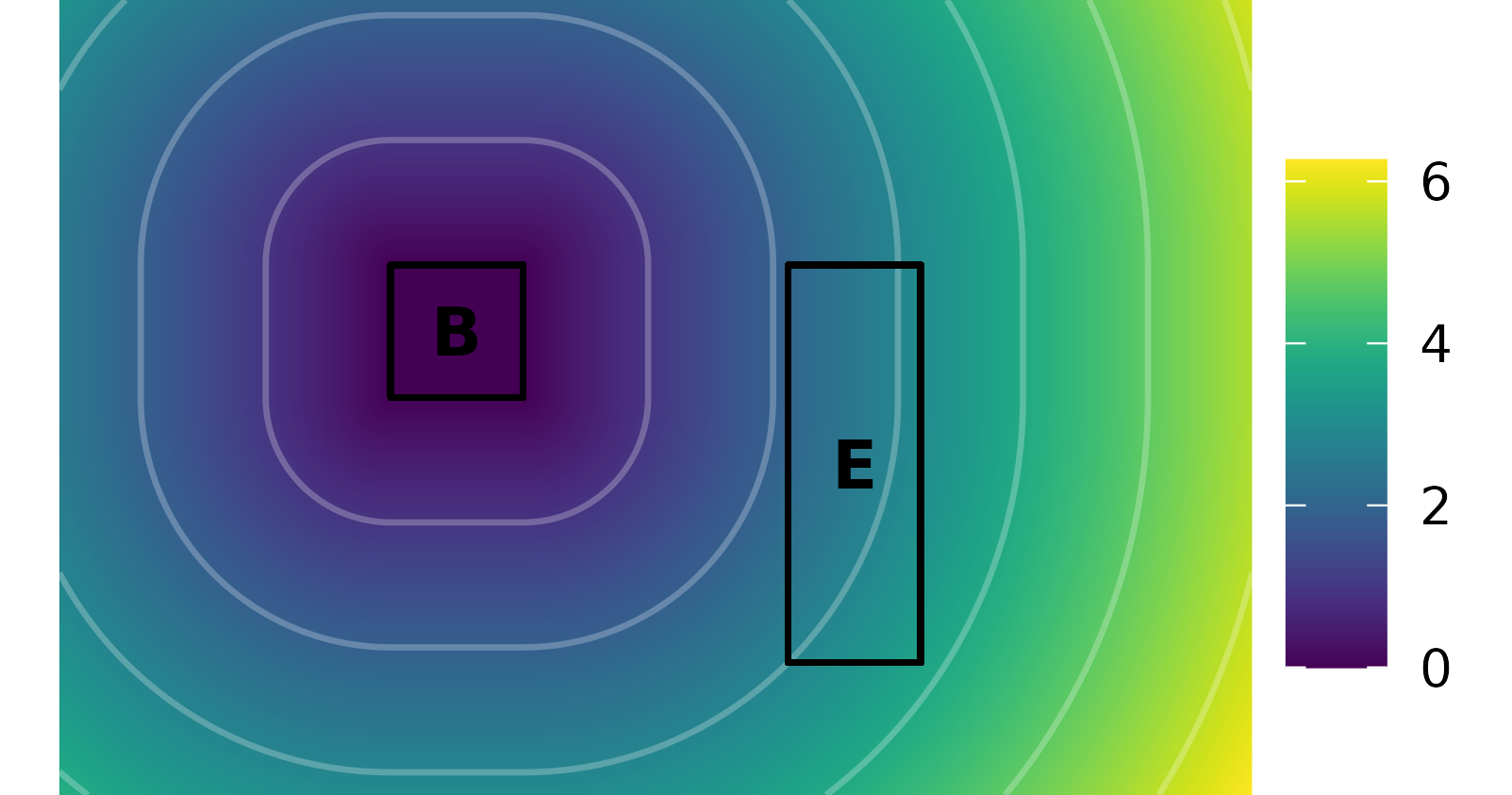}
    \caption{$\operatorname{MinDist}(p,B)$}
    \label{fig:img1}
  \end{subfigure}\hspace{0.02\linewidth}%
  \begin{subfigure}[b]{0.485\linewidth}
    \includegraphics[width=\linewidth]{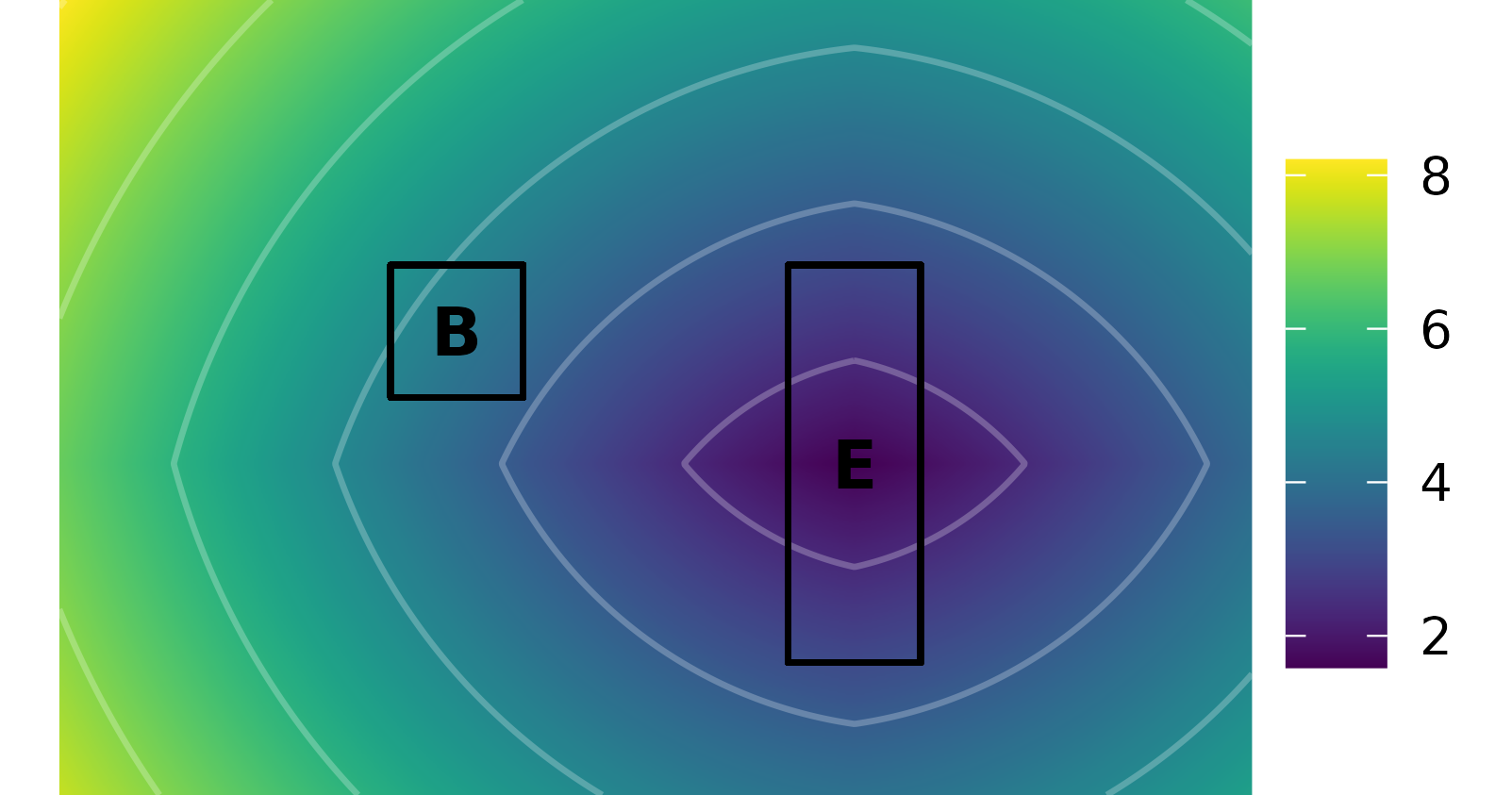}
    \caption{$\operatorname{MaxDist}(p,E)$}
    \label{fig:img3}
  \end{subfigure}

  \vspace{3mm}

  \begin{subfigure}[b]{0.485\linewidth}
    \includegraphics[width=\linewidth]{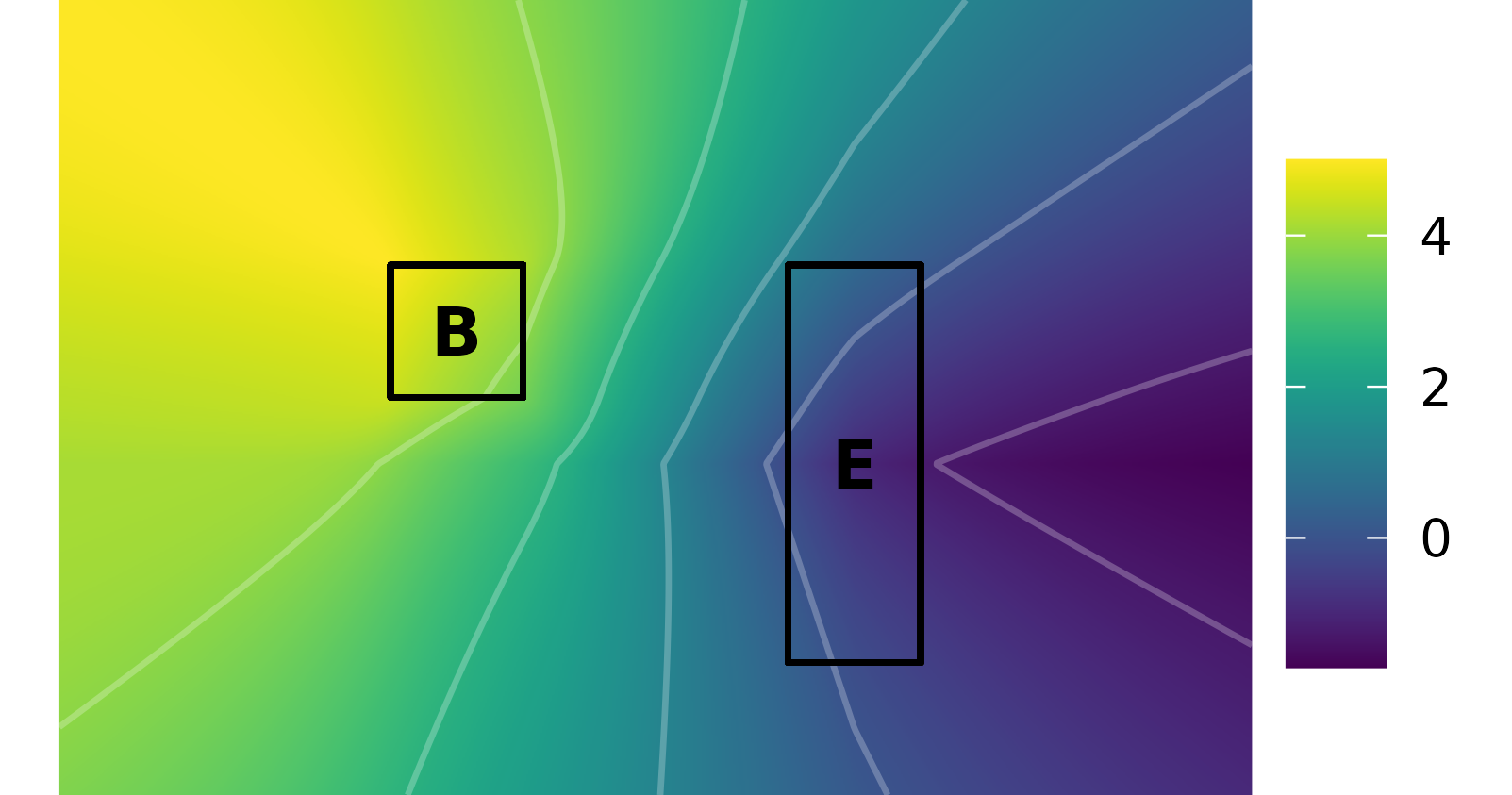}
    \caption{$\operatorname{MaxDist}(p,E)-\operatorname{MinDist}(p,B)$}
    \label{fig:img2}
  \end{subfigure}\hspace{0.02\linewidth}%
  \begin{subfigure}[b]{0.485\linewidth}
    \includegraphics[width=\linewidth]{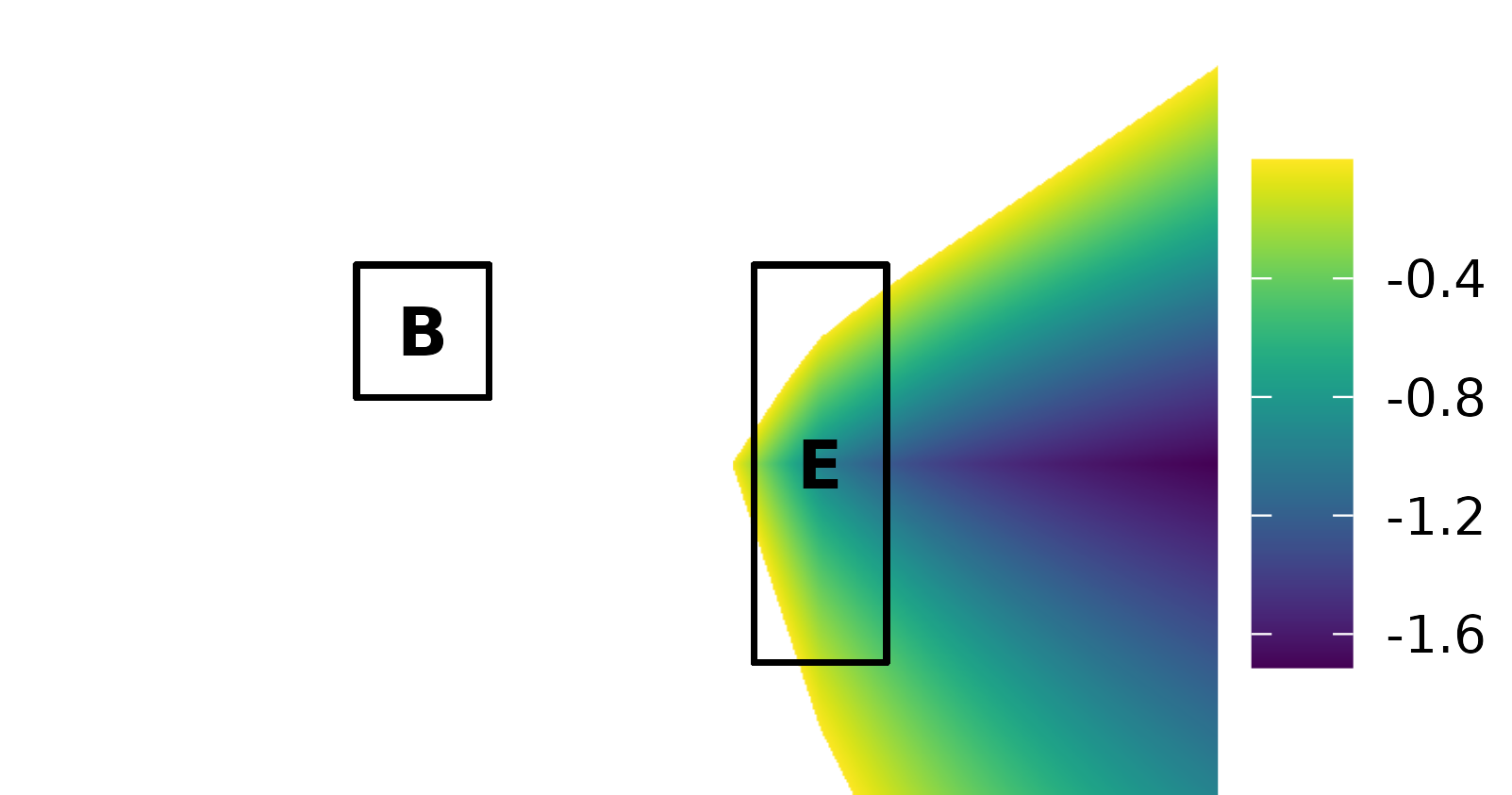}
    \caption{$\operatorname{MaxDist}(p,E)<\operatorname{MinDist}(p,B)$}
    \label{fig:img4}
  \end{subfigure}

  \caption{Visualization of used distance functions. \Cref{fig:img1} and \Cref{fig:img3} show existing point-to-bound functions. \Cref{fig:img2} shows our difference-of-distances function internally used by AllPointsCloser.}
  \label{fig:four-panel}
\end{figure}

\subsubsection{Definitions}
We have space $\mathbb{R}^r$.
This space contains origin $O$, basis $B$, and evaluation $E$ bounds, each an axis-aligned bounding box.
For an AABB $M$, $\Corners(M)$ represents the bound's $2^r$ corners/vertices.
$p \in M$ states that a point $p$ is within the inclusive bounds of $M$.

From prior work~\cite{roussopoulos_nearest_1995}, for a given AABB $M$,
\begin{equation}
\MaxDist(p, M) = \max_{q \in M} \dist(p, q)
\end{equation}
\begin{equation}
\MinDist(p, M) = \min_{q \in M} \dist(p, q)
\end{equation}
Additionally, $M_d$ is the $d$-th dimension bound of the AABB.
A dimension bound has low ($\check{M}_d$) and high ($\hat{M}_d$) bounds in its respective dimension.
All AABBs are well formed, so for each bound $\check{M}_d \leq \hat{M}_d$.

\subsubsection{Proof of All-Points Proximity Theorem}\footnote{A machine-checked proof in Lean 4 using Mathlib is available in the supplementary materials.}

\begin{theorem}[All-Points Proximity Theorem]
    \label{app}
    \begin{gather*}
    \left[\forall o' \in \Corners(O) : \MaxDist(o', E) < \MinDist(o', B)\right] \iff \\
    \forall o \in O : \forall e \in E : \forall b \in B : \dist(o, e) < \dist(o, b)
    \end{gather*}
\end{theorem}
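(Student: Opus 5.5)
The plan is to reduce both sides to a single scalar function on $O$ and use convexity. First, the right-hand side is a pointwise statement. For a fixed $o$, the condition $\forall e \in E : \forall b \in B : \dist(o,e) < \dist(o,b)$ is equivalent to $\MaxDist(o,E) < \MinDist(o,B)$. This holds because $E$ and $B$ are compact, so the max and the min are attained by some $e^\ast \in E$ and $b^\ast \in B$.

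Since distances are nonnegative, this is in turn equivalent to $g(o) < 0$, where
\[ g(p) = \MaxDist^2(p,E) - \MinDist^2(p,B). \]
So the theorem becomes: $g<0$ on $\Corners(O)$ if and only if $g<0$ on all of $O$. The direction ($\Leftarrow$) is immediate, because $\Corners(O) \subseteq O$. All the work is in ($\Rightarrow$).

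For ($\Rightarrow$), I will show that $g$ is convex on $\mathbb{R}^r$. A convex function on the box $O$ attains its maximum at a vertex. Concretely, write any $o \in O$ as a convex combination of the corners, e.g.\ as a product of per-coordinate weights. Finite Jensen then gives $g(o) \le \sum_i \lambda_i g(o'_i) \le \max_{o'} g(o') < 0$.

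Convexity comes from separability over axes. For AABBs we have $\MaxDist^2(p,E) = \sum_d \max\bigl((p_d-\check{E}_d)^2, (p_d-\hat{E}_d)^2\bigr)$. Likewise $\MinDist^2(p,B) = \sum_d \delta_d(p_d)^2$, where $\delta_d$ is the distance from $p_d$ to the interval $[\check{B}_d,\hat{B}_d]$. Hence $g(p) = \sum_d h_d(p_d)$, and it suffices to show that each one-dimensional $h_d$ is convex.

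The main obstacle is this one-dimensional convexity, since $g$ is a difference of convex functions and convexity is not automatic. I will handle it with midpoint/half-width forms. With $m,w$ the midpoint and half-width of $E_d$, the first term is $(|x-m|+w)^2$. This is convex, with second derivative $2$ away from $m$ and a convex (upward) kink at $m$. With $m',w'$ for $B_d$, the second term is $\max(|x-m'|-w',0)^2$. This is $C^1$ and piecewise quadratic, with second derivative either $0$ or $2$.

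The difference $h_d$ is therefore continuous and piecewise quadratic with second derivative $\ge 0$ on each piece. At breakpoints, the subtracted term contributes no kink because it is $C^1$, and the only kink, at $m$, is convex. So the one-sided derivatives of $h_d$ are nondecreasing, and $h_d$ is convex.

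Summing over $d$ gives convexity of $g$. The Jensen/vertex argument above then closes ($\Rightarrow$) and completes the equivalence.
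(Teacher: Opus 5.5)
Your proposal is correct and follows the paper's proof. It uses the same reduction to $g(p)=\MaxDist^2(p,E)-\MinDist^2(p,B)<0$, the same per-dimension decomposition into $h_d$, and Jensen over the corners, with the easy direction coming from $\Corners(O)\subseteq O$. The one difference is the one-dimensional convexity step: you note that the subtracted $\MinDist^2$ term is $C^1$, so the only kink is the upward one at the midpoint of $E_d$; the paper instead checks one-sided derivatives at $\bar{E}_d$, $\check{B}_d$, $\hat{B}_d$ case by case in its table, and your observation handles coinciding breakpoints without that case split.
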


\begin{proof}
    We prove each implication separately, with the straightforward reverse implication in \Cref{app_left} and the non-trivial forward implication in \Cref{app_right}.
\end{proof}

\begin{lemma}
    \label{app_left}
    \begin{gather*}
    \left[\forall o \in O : \forall e \in E : \forall b \in B : \dist(o, e) < \dist(o, b)\right] \implies \\ \forall o' \in \Corners(O) : \MaxDist(o', E) < \MinDist(o', B)
    \end{gather*}
\end{lemma}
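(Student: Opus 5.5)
The plan is to specialize the universal hypothesis to corner points and then use compactness of $E$ and $B$ to turn pointwise strict inequalities into the strict inequality between $\MaxDist$ and $\MinDist$. Fix an arbitrary corner $o' \in \Corners(O)$. Every corner of an AABB lies within its inclusive bounds, since each coordinate of $o'$ is either $\check{O}_d$ or $\hat{O}_d$. Hence $o' \in O$, and the hypothesis gives $\dist(o', e) < \dist(o', b)$ for all $e \in E$ and all $b \in B$.

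The key step is to pass from these pointwise strict inequalities to a strict inequality between the extremal values. A supremum over strict inequalities in general yields only $\le$. Here, however, $E$ and $B$ are nonempty closed bounded boxes (well formed, so $\check{M}_d \le \hat{M}_d$), hence compact. The function $q \mapsto \dist(o', q)$ is continuous, so both extrema in the definitions of $\MaxDist$ and $\MinDist$ are attained. Choose $e^* \in E$ with $\dist(o', e^*) = \MaxDist(o', E)$ and $b^* \in B$ with $\dist(o', b^*) = \MinDist(o', B)$. Instantiating the hypothesis at $(o', e^*, b^*)$ then gives
\begin{equation*}
\MaxDist(o', E) = \dist(o', e^*) < \dist(o', b^*) = \MinDist(o', B).
\end{equation*}
Since $o'$ was an arbitrary corner, the conclusion follows.

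The only step needing care is attainment of the max and min. If one preferred to avoid a compactness argument, attainment can also be shown by construction. For the farthest point, take $e^*_d = \check{E}_d$ if $|o'_d - \check{E}_d| \ge |o'_d - \hat{E}_d|$ and $e^*_d = \hat{E}_d$ otherwise. For the nearest point, take $b^*_d = \min(\max(o'_d, \check{B}_d), \hat{B}_d)$, i.e. clamp $o'_d$ into $[\check{B}_d, \hat{B}_d]$. Squared Euclidean distance separates into a sum over coordinates, so minimizing or maximizing each coordinate independently gives the global optimum. This matches the standard formulas of Roussopoulos et al.~\cite{roussopoulos_nearest_1995}. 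Either route finishes the lemma.
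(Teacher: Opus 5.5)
Your proof is correct and follows the paper's argument: the corner $o'$ lies in $O$, and you instantiate the hypothesis at the points of $E$ and $B$ that attain $\MaxDist(o',E)$ and $\MinDist(o',B)$. Your compactness argument (or the explicit per-coordinate construction) for why these extrema are attained is a useful addition, since the paper takes the $\argmax$ and $\argmin$ for granted.
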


\begin{proof}
    We assume $\forall o \in O : \forall e \in E : \forall b \in B : \dist(o, e) < \dist(o, b)$ is true and prove $\forall o' \in \Corners(O) : \MaxDist(o', E) < \MinDist(o', B)$.
    
    All points are inside their respective bounds:

\begin{equation}
o' \in \Corners(O) \implies o' \in O
\end{equation}
\begin{equation}
\MaxDist(o', E) = \max_{e\in E} \dist(o', e)
\end{equation}
\begin{equation}
\MinDist(o', B) = \min_{b\in B} \dist(o', b)
\end{equation}

    Let $o := o'$, $e := \argmax_{e\in E} \dist(o', e)$, and $b := \argmin_{b\in B} \dist(o', b)$.
    From their definitions, we know $o\in O$, $e\in E$, and $b\in B$.
    Rewriting with these values,

\begin{equation}
\MaxDist(o', E) = \dist(o, e)
\end{equation}
\begin{equation}
\MinDist(o', B) = \dist(o, b)
\end{equation}

    By our assumption, $\dist(o, e) < \dist(o, b)$.
\end{proof}

\begin{lemma}
    \label{app_right}
    \begin{gather*}
    \left[\forall o' \in \Corners(O) : \MaxDist(o', E) < \MinDist(o', B) \right] \implies \\
    \forall o \in O : \forall e \in E : \forall b \in B : \dist(o, e) < \dist(o, b)
    \end{gather*}
\end{lemma}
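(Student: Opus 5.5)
The plan is to reduce the universally quantified statement to a statement about a single function of the origin point, and then use convexity to push that statement from the corners to all of $O$. Define
\[
g(p) := \MaxDist^2(p, E) - \MinDist^2(p, B).
\]
We work with squared distances because they are much better behaved. For any point $p$ and fixed $e$, the map $\dist^2(p,e) - \dist^2(p,b)$ is affine in $p$. The hypothesis is equivalent to $g(o') < 0$ at every corner $o'$, since squaring preserves the order of nonnegative numbers. The conclusion follows once we show $g(o) < 0$ for every $o \in O$, because then for any $e\in E$, $b\in B$,
\[
\dist(o,e) \le \MaxDist(o,E) < \MinDist(o,B) \le \dist(o,b).
\]

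The key step is showing that $g$ attains its maximum over $O$ at a corner. The first term is $\MaxDist^2(p,E)=\max_{e\in E}\|p-e\|^2$. We expand $\|p-e\|^2 - \|p-b\|^2 = \|e\|^2-\|b\|^2 - 2\langle p, e-b\rangle$, which is affine in $p$. Hence
\[
g(p) = \max_{e\in E}\min_{b\in B}\bigl(\|e\|^2-\|b\|^2 - 2\langle p,e-b\rangle\bigr).
\]
The inner minimum over $b$ is not convex in general, so instead we exploit the axis-aligned structure. Both $\MaxDist^2$ and $\MinDist^2$ to an AABB decompose as sums over dimensions:
\[
\MaxDist^2(p,E)=\sum_d \max\bigl((p_d-\check E_d)^2,(p_d-\hat E_d)^2\bigr),
\qquad
\MinDist^2(p,B)=\sum_d \bigl(\max(\check B_d - p_d,\,0,\,p_d-\hat B_d)\bigr)^2.
\]
So $g(p)=\sum_d g_d(p_d)$ is separable, and it suffices to show each one-dimensional $g_d$ attains its maximum over $[\check O_d,\hat O_d]$ at an endpoint, which is what the paper's forthcoming convexity lemma should deliver. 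Then for any $o\in O$, choosing in each coordinate the endpoint maximizing $g_d$ yields a corner $o'$ with $g(o)\le g(o')<0$.

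The main obstacle is the one-dimensional claim, i.e.\ that $g_d$ is convex (or at least quasi-convex on the interval). Write $M(x)=\max((x-\check E)^2,(x-\hat E)^2)$, a maximum of two parabolas with leading coefficient $1$, and $m(x)$ the squared distance to $[\check B,\hat B]$, which is $(x-c)^2$ outside that interval and $0$ inside. Then $g_d = M - m$.

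Outside $[\check B,\hat B]$, each piece $M - (x-c)^2$ is a maximum of affine functions, hence convex. Inside, $g_d = M$, which is convex. The remaining work is checking the breakpoints at $\check B$ and $\hat B$. There the slope of $m$ changes continuously from $0$ to $\pm 2(x-c)$, with both one-sided derivatives equal to $0$, so $m$ is $C^1$ there and $-m$ introduces no concave kink. Together with the convexity of $M$, which is $C^0$ and convex everywhere, this makes $g_d$ convex. I would verify this by comparing one-sided derivatives at the breakpoints: since $m'$ is continuous everywhere, $g_d'$ jumps only where $M'$ jumps, and those jumps are upward.

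Convexity on an interval gives the endpoint maximum, and applying it coordinatewise completes the argument. Equivalently, one can phrase it as in the paper's summary: write $o$ as a convex combination of corners and apply finite Jensen to the convex function $g$.
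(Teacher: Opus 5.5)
Your proof is correct and has the same skeleton as the paper's. You use the same function $g$, prove convexity of each one-dimensional summand, and transfer the sign condition from the corners to all of $O$. Two differences in execution are worth noting.

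First, instead of applying Jensen to a convex combination of the $2^r$ corners, you use separability to choose, coordinate by coordinate, the endpoint that maximizes $g_d$. This gives $\max_{o\in O} g(o)=\sum_d \max\{g_d(\check O_d),g_d(\hat O_d)\}$. That sum equals $-s$, where $s$ is the final running sum in the optimized algorithm. So your route also justifies the $O(R)$ algorithm, which the paper states without a separate proof.

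Second, instead of the paper's table of one-sided derivatives at $\bar E_d$, $\check B_d$, $\hat B_d$, you argue structurally. Off $[\check B_d,\hat B_d]$ the function is a maximum of two affine functions, and on that interval it is the convex $M$. The squared distance to an interval is $C^1$, so no concave kink can appear at $\check B_d$ or $\hat B_d$. This disposes of all the cases at once.

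One slip needs fixing. Since $-\min_b x_b=\max_b(-x_b)$, the identity should read
\[
g(p)=\max_{e\in E}\,\max_{b\in B}\bigl(\|e\|^2-\|b\|^2-2\langle p,e-b\rangle\bigr),
\]
not $\max_e\min_b$. Your version actually equals $\MaxDist^2(p,E)-\MaxDist^2(p,B)$.

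With the sign fixed, $g$ is a pointwise supremum of affine functions and is therefore convex immediately. That needs no breakpoint analysis and no use of the box structure of $E$ or $B$. So the route you abandoned was the shortest one, and it shows the lemma holds for arbitrary compact $E$ and $B$.
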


\begin{proof}
\strut
\\
    We assume $\forall o' \in \Corners(O) : \MaxDist(o', E) < \MinDist(o', B)$ and prove $\forall o \in O : \forall e \in E : \forall b \in B : \dist(o, e) < \dist(o, b)$.
    
    Let $\Corners(O) = c_1, c_2, \dots, c_{(2^r)}$. Since the AABB represented by $O$ is convex, we can represent an arbitrary point $o \in O$ as an affine combination of the corners:

\begin{equation}
o = \sum_{i=1}^{2^r} \alpha_i c_i
\end{equation}

    where $\forall i \in 1..2^r : \alpha_i \geq 0$ and $\alpha_1 + \alpha_2 + \dots + \alpha_{(2^r)} = 1$.
    
    Since $\MaxDist$ and $\MinDist$ are non-negative, we can rewrite this for any point $p$:

    \begin{align}
    \MaxDist(p, E) & < \MinDist(p, B) \label{a} \\
    \MaxDist(p, E)^2 & < \MinDist(p, B)^2 \\
    \MaxDist(p, E)^2 - \MinDist(p, B)^2 & < 0 \label{b}
    \end{align}
    
    Parameterizing this, let $g(p) = \MaxDist(p, E)^2 - \MinDist(p, B)^2$.
    Since $g(p)$ is convex by \Cref{g_convex}, we can apply Jensen's inequality:
    
    \begin{align}
    g\left(\sum_{i=1}^{2^r} \alpha_i c_i\right) &\leq \sum_{i=1}^{2^r} \alpha_i \cdot g\left(c_i\right) \\
    g\left(o\right) &\leq \sum_{i=1}^{2^r} \alpha_i \cdot g\left(c_i\right)
    \end{align}
    
    By the prior rewrites \cref{a} to \cref{b} we know that $g(p) < 0 \iff \MaxDist(p, E) < \MinDist(p, B)$.
    Thus, we can write:
    
    \begin{align}
        \forall o' \in \Corners(O) & : \MaxDist(o', E) < \MinDist(o', B) \\
        \forall o' \in \Corners(O) & : g(o') < 0 \\
        \forall i \in 1..2^r & : g(c_i) < 0
    \end{align}
    
    Since $\alpha_i \geq 0$ and $g(c_i) < 0$, we know that $\forall i \in 1..2^r : \alpha_i \cdot g\left(c_i\right) \leq 0$.
    Since $\sum \alpha_i = 1$, at least one term is strictly negative, so the sum is strictly negative:
    
    \begin{align}
        \sum_{i=1}^{2^r} \alpha_i \cdot g\left(c_i\right) & < 0 \\
        g(o) \leq \sum_{i=1}^{2^r} \alpha_i \cdot g\left(c_i\right) & < 0 \\
        g(o) & < 0 \\
        \MaxDist(o, E) & < \MinDist(o, B) \\
        \max_{e\in E} \dist(o, e) & < \min_{b\in B} \dist(o, b) \\
        \forall e \in E, b \in B : \dist(o, e) \leq \max_{e'\in E} \dist(o, e') & < \min_{b'\in B} \dist(o, b') \leq \dist(o, b) \\
        \forall e \in E : \forall b \in B : \dist(o, e) & < \dist(o, b)
    \end{align}
\end{proof}

\begin{lemma}
    \label{g_convex}
    For $g(p) = \MaxDist(p, E)^2 - \MinDist(p, B)^2$,
    $g(p)$ is convex.
\end{lemma}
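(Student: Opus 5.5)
Since both $\MaxDist$ and $\MinDist$ to an AABB decompose coordinatewise, I will prove convexity of $g$ by splitting it into a sum of one-dimensional functions. For an AABB $M$ with dimension bounds $[\check{M}_d, \hat{M}_d]$, the squared Euclidean distances separate:
\begin{align*}
\MaxDist(p, E)^2 &= \sum_{d=1}^{r} \max\left((p_d - \check{E}_d)^2, (p_d - \hat{E}_d)^2\right), \\
\MinDist(p, B)^2 &= \sum_{d=1}^{r} \left(\max(\check{B}_d - p_d, 0, p_d - \hat{B}_d)\right)^2 .
\end{align*}
Both identities hold because the box is a product of intervals. The farthest and nearest points of $M$ from $p$ are therefore obtained by choosing, independently in each coordinate, the farthest or nearest value in $[\check{M}_d, \hat{M}_d]$. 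Hence $g(p) = \sum_d h_d(p_d)$, where
\[
h_d(x) = \max\left((x - \check{E}_d)^2, (x - \hat{E}_d)^2\right) - \left(\max(\check{B}_d - x, 0, x - \hat{B}_d)\right)^2 .
\]
A sum of convex functions of separate coordinates is convex on $\mathbb{R}^r$, so it suffices to show that each $h_d$ is convex on $\mathbb{R}$.

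For the one-dimensional step, write $(x-a)^2 = x^2 - 2ax + a^2$. The $x^2$ terms cancel in a useful way:
\[
\max\left((x - \check{E}_d)^2, (x - \hat{E}_d)^2\right) = x^2 + \max\left(-2\check{E}_d x + \check{E}_d^2,\; -2\hat{E}_d x + \hat{E}_d^2\right).
\]
This is $x^2$ plus a maximum of two affine functions, so it is $x^2 + \phi_E(x)$ with $\phi_E$ convex and piecewise linear. For the subtracted term, I claim
\[
\MinDist_d(x)^2 = x^2 - \psi_B(x),
\]
where $\psi_B$ is convex. On $x < \check{B}_d$ it equals $x^2 - (-2\check{B}_d x + \check{B}_d^2)$. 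On $[\check{B}_d, \hat{B}_d]$ it equals $x^2 - x^2$. On $x > \hat{B}_d$ it equals $x^2 - (-2\hat{B}_d x + \hat{B}_d^2)$. So $\psi_B$ is
\[
\psi_B(x) = \begin{cases} -2\check{B}_d x + \check{B}_d^2, & x < \check{B}_d, \\ x^2, & \check{B}_d \le x \le \hat{B}_d, \\ -2\hat{B}_d x + \hat{B}_d^2, & x > \hat{B}_d. \end{cases}
\]
This is $x^2$ on the middle interval, extended by its tangent lines at the endpoints. It is continuous and $C^1$, and its derivative $-2\check{B}_d$, then $2x$, then $2\hat{B}_d$ is nondecreasing, so $\psi_B$ is convex. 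Then $h_d = (x^2 + \phi_E) - (x^2 - \psi_B) = \phi_E + \psi_B$, which is a sum of convex functions and hence convex.

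The main obstacle is that $g$ is a difference involving $-\MinDist^2$, which is not convex on its own. The key step is the cancellation of the $x^2$ terms that turns the subtraction into a sum of convex pieces, together with checking that $\psi_B$ really is convex, which I do through the monotone derivative and $C^1$ matching at the breakpoints. The rest, namely the coordinatewise decomposition and the fact that separable sums of convex functions are convex, is routine. One further check is needed: the well-formedness $\check{B}_d \le \hat{B}_d$ must hold so that the three cases of $\psi_B$ are ordered correctly, and this is guaranteed by the paper's definitions.
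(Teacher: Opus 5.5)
Your proof is correct once one sign slip is fixed (see the second paragraph), and it takes a different route from the paper's. Both proofs start with the same coordinatewise split into $h_d$. From there the paper writes $h_d$ piecewise with breakpoints at $\bar{E}_d$, $\check{B}_d$, $\hat{B}_d$. It notes that each piece is either affine or a parabola. It then checks that the right derivative is at least the left derivative at each breakpoint, over seven orderings of $\bar{E}_d$ relative to $\check{B}_d,\hat{B}_d$ (one case in the text, six in \Cref{tab:cases}). You instead cancel the $x^2$ terms, so that $h_d=\phi_E+\psi_B$ is visibly a sum of convex functions, and no case analysis is needed.

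Your idea also extends beyond boxes. The functions $\MaxDist(p,E)^2-\|p\|^2=\max_{e\in E}(\|e\|^2-2\langle p,e\rangle)$ and $\|p\|^2-\MinDist(p,B)^2=\max_{b\in B}(2\langle p,b\rangle-\|b\|^2)$ are both pointwise maxima of affine functions of $p$. Hence $g$ is convex for any nonempty compact $E$ and $B$, without the coordinate split. The paper's argument is more mechanical and tied to intervals, but each step is an elementary derivative comparison.

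The slip is in the outer pieces of $\psi_B$, which have the wrong sign. The tangent line to $x^2$ at $c$ is $2cx-c^2$. So you need $\psi_B(x)=2\check{B}_d x-\check{B}_d^2$ for $x<\check{B}_d$ and $\psi_B(x)=2\hat{B}_d x-\hat{B}_d^2$ for $x>\hat{B}_d$, with slopes $2\check{B}_d$ and $2\hat{B}_d$. As you wrote it, $x^2-\psi_B(x)=x^2+2\check{B}_d x-\check{B}_d^2$, which is not $(\check{B}_d-x)^2$. Your $\psi_B$ also jumps from $-\check{B}_d^2$ to $\check{B}_d^2$ at $\check{B}_d$, with a similar jump at $\hat{B}_d$. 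With the corrected formulas, your description (tangent-line extension, $C^1$, nondecreasing derivative $2\check{B}_d$, then $2x$, then $2\hat{B}_d$) holds and the proof goes through.
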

\begin{proof}
    We first introduce the squared definitions of $\MaxDist$ and $\MinDist$:

\begin{equation}
        \MaxDist(p, M)^2 = \sum^r_{d=1} \max \,\left\{    \begin{array}{l}
            \left(\check{M}_d - p_d\right)^2, \\
            \left(\hat{M}_d - p_d\right)^2
        \end{array}
        \right\}
\end{equation}

\begin{equation}
    \begin{array}{l}
        \MinDist(p, M)^2 = \\
        \sum^r_{d=1} \left\{
        \begin{array}{lr}
            \left(\check{M}_d - p_d\right)^2, & p_d < \check{M}_d \\
            \left(\hat{M}_d - p_d\right)^2, & \hat{M}_d < p_d \\
            0, & \check{M}_d \leq p_d \leq \hat{M}_d
        \end{array}
        \right\}
    \end{array}
\end{equation}
    
    It is sufficient to prove that each dimension is individually convex, as the sums of convex functions are also convex.
    Every dimension $d$ has an associated function $h_d$ of type $\mathbb{R} \rightarrow \mathbb{R}$.
    Let $e : \mathbb{R} = E_d$ and $b : \mathbb{R} = B_d$:

\begin{equation}
    g(p) = \sum^r_{d=1} h_d(p)
\end{equation}

\begin{equation}
    \begin{array}{l}
            h_d(p) = \max\left\{
  \begin{array}{l}
    (\check{E}_d - p)^2, \\
    (\hat{E}_d - p)^2
  \end{array}\right\} - \\
            \quad\quad \left\{
  \begin{array}{lr}
    (\check{B}_d - p)^2, & p < \check{B}_d \\[3pt]
    (\hat{B}_d - p)^2, & \hat{B}_d < p \\[3pt]
    0, & \check{B}_d \leq p \leq \hat{B}_d
  \end{array}
\right\}
        \end{array}
\end{equation}

    It now suffices to prove, for all bounds $[\check{E}_d, \hat{E}_d]$ and $[\check{B}_d, \hat{B}_d]$, that $h_d(p)$ is convex.
    In this definition, the $\max$ expression is symmetric around the midpoint.
    Let $\bar{E}_d=\frac{\check{E}_d+\hat{E}_d}2$.

\begin{equation}
    \max \,\left\{    \begin{array}{l}
            \left(\check{E}_d - p\right)^2, \\
            \left(\hat{E}_d - p\right)^2
        \end{array}\right\}
    =
    \left\{
            \begin{array}{lr}
                \left(\hat{E}_d - p\right)^2, & p < \bar{E}_d \\
                \left(\check{E}_d - p\right)^2, & \bar{E}_d \leq p
    
            \end{array}
        \right\}
\end{equation}

    We can now redefine the statement as a piecewise expression:

\begin{equation}
    \begin{array}{l}
            h_d(p) =\left\{
            \begin{array}{lr}
                \left(\hat{E}_d - p\right)^2, & p < \bar{E}_d \\
                \left(\check{E}_d - p\right)^2, & \bar{E}_d \leq p
    
            \end{array} \right\} - \\
            \quad \left\{
            \begin{array}{lr}
                \left(\check{B}_d - p\right)^2, & p < \check{B}_d \\
                \left(\hat{B}_d - p\right)^2, & \hat{B}_d < p \\
                0, & \check{B}_d \leq p \leq \hat{B}_d
            \end{array}
            \right\}
        \end{array}
\end{equation}

    This function has at most three piecewise break points: $p=\bar{E}_d$, $p=\check{B}_d$, and $p=\hat{B}_d$.
    To show that piecewise function $h_d(p)$ is convex, we demonstrate that each continuous span of the function has a non-negative second derivative.
    These spans join at break points, characterized by an instantaneous change in their first derivative, where we show that the right-sided derivative is not smaller than the left-sided derivative.
    Every case of $h_d(p)$ has form $(a-x)^2-(b-x)^2$, or, when $\check{B}_d\leq p\leq \hat{B}_d$, then $(a-x)^2$.
    Both of these forms, $\forall a, b \in \mathbb{R} : f(x)=(a-x)^2-(b-x)^2$ and $\forall a : f(x)=(a-x)^2$, are convex with respect to $x$, so it suffices to only check the break points connecting these spans:
        
    \textbf{Case:} $\bar{E}_d=\check{B}_d=\hat{B}_d$ \\
    First, rewrite $\check{B}_d$ and $\hat{B}_d$ as $\bar{E}_d$:

\begin{equation}
    \begin{array}{l}
            h_d(p) =\left\{
            \begin{array}{lr}
                \left(\hat{E}_d - p\right)^2, & p < \bar{E}_d \\
                \left(\check{E}_d - p\right)^2, & \bar{E}_d \leq p
    
            \end{array} \right\} - \\
            \quad \left\{
            \begin{array}{lr}
                \left(\bar{E}_d - p\right)^2, & p < \bar{E}_d \\
                \left(\bar{E}_d - p\right)^2, & \bar{E}_d < p \\
                0, & \bar{E}_d \leq p \leq \bar{E}_d
            \end{array}
            \right\}
    \end{array}
\end{equation}
    
    Second, simplify piecewise:

\begin{equation}
    h_d(p) =\left\{
            \begin{array}{lr}
                \left(\hat{E}_d - p\right)^2, & p < \bar{E}_d \\
                \left(\check{E}_d - p\right)^2, & \bar{E}_d \leq p
    
            \end{array} \right\} - \left(\bar{E}_d - p\right)^2
\end{equation}
\begin{equation}
    \frac{d}{dp} \left[(\hat{E}_d-p)^2-(\bar{E}_d -p)^2\right]=2\bar{E}_d-2\hat{E}_d
\end{equation}
\begin{equation}
    \lim_{p\to \bar{E}_d^{-}} 2\bar{E}_d-2\hat{E}_d = 2\bar{E}_d-2\hat{E}_d
\end{equation}
\begin{equation}
    \frac{d}{dp} \left[(\check{E}_d-p)^2-(\bar{E}_d -p)^2\right]=2\bar{E}_d-2\check{E}_d
\end{equation}
\begin{equation}
    \lim_{p\to \bar{E}_d^{+}} 2\bar{E}_d-2\check{E}_d = 2\bar{E}_d-2\check{E}_d
\end{equation}
    
    Finally, the left-sided derivative is less than or equal to the right-sided derivative:
\begin{equation}
    2\bar{E}_d-2\hat{E}_d \leq 2\bar{E}_d-2\check{E}_d
\end{equation}
\begin{equation}
    -\hat{E}_d \leq -\check{E}_d
\end{equation}
\begin{equation}
    \check{E}_d \leq \hat{E}_d
\end{equation}

    \begin{table}[]
        \begin{tabular}{|lccr|}
        \hline
        $x$ & $\lim_{p\to x^{-}} \frac{d}{dp}h_d(p)$ & $\lim_{p\to x^{+}} \frac{d}{dp}h_d(p)$ & asm. \\
        \hline\hline
        \multicolumn{4}{|l|}{\textbf{Case: }$\bar{E}_d<\check{B}_d<\hat{B}_d$} \\
        \hline
        $\bar{E}_d$ & $2\check{B}_d - 2\hat{E}_d$ & $2\check{B}_d - 2\check{E}_d$ & $\check{E}_d \leq \hat{E}_d$ \\
        $\check{B}_d$ & $2\check{B}_d - 2\check{E}_d$ & $2\check{B}_d-2\check{E}_d$ &  \\
        $\hat{B}_d$ & $2\hat{B}_d - 2\check{E}_d$ & $2\hat{B}_d - 2\check{E}_d$ &  \\
        \hline\hline
        \multicolumn{4}{|l|}{\textbf{Case: }$\check{B}_d<\bar{E}_d<\hat{B}_d$} \\
        \hline
        $\check{B}_d$ & $2\check{B}_d - 2\hat{E}_d$ & $2\check{B}_d-2\hat{E}_d$ &  \\
        $\bar{E}_d$ & $2\bar{E}_d - 2\hat{E}_d$ & $2\bar{E}_d - 2\check{E}_d$ & $\check{E}_d \leq \hat{E}_d$ \\
        $\hat{B}_d$ & $2\hat{B}_d - 2\check{E}_d$ & $2\hat{B}_d-2\check{E}_d$ &  \\
        \hline\hline
        \multicolumn{4}{|l|}{\textbf{Case: }$\check{B}_d<\hat{B}_d<\bar{E}_d$} \\
        \hline
        $\check{B}_d$ & $2\check{B}_d - 2\hat{E}_d$ & $2\check{B}_d-2\hat{E}_d$ &  \\
        $\hat{B}_d$ & $2\hat{B}_d - 2\hat{E}_d$ & $2\hat{B}_d - 2\hat{E}_d$ &  \\
        $\bar{E}_d$ & $2\hat{B}_d - 2\hat{E}_d$ & $2\hat{B}_d-2\check{E}_d$ & $\check{E}_d \leq \hat{E}_d$ \\
        \hline\hline
        \multicolumn{4}{|l|}{\textbf{Case: }$\bar{E}_d<\check{B}_d=\hat{B}_d$} \\
        \hline
        $\bar{E}_d$ & $2\check{B}_d - 2\hat{E}_d$ & $2\check{B}_d - 2\check{E}_d$ & $\check{E}_d \leq \hat{E}_d$ \\
        $\check{B}_d$ & $2\check{B}_d - 2\check{E}_d$ & $2\hat{B}_d - 2\check{E}_d$ &  \\
        \hline\hline
        \multicolumn{4}{|l|}{\textbf{Case: }$\bar{E}_d=\check{B}_d<\hat{B}_d$} \\
        \hline
        $\bar{E}_d$ & $2\check{B}_d-2\hat{E}_d$ & $2\bar{E}_d - 2\check{E}_d$ & $\check{E}_d \leq \hat{E}_d$ \\
        $\hat{B}_d$ & $2\hat{B}_d-2\check{E}_d$ & $2\hat{B}_d-2\check{E}_d$ &  \\
        \hline\hline
        \multicolumn{4}{|l|}{\textbf{Case: }$\check{B}_d<\bar{E}_d=\hat{B}_d$} \\
        \hline
        $\check{B}_d$ & $2\check{B}_d - 2\hat{E}_d$ & $2\check{B}_d - 2\hat{E}_d$ &  \\
        $\bar{E}_d$ & $2\bar{E}_d - 2\hat{E}_d$ & $2\hat{B}_d - 2\check{E}_d$ & $\check{E}_d \leq \hat{E}_d$ \\
        \hline\hline
        \multicolumn{4}{|l|}{\textbf{Case: }$\check{B}_d=\hat{B}_d<\bar{E}_d$} \\
        \hline
        $\check{B}_d$ & $2\check{B}_d - 2\hat{E}_d$ & $2\hat{B}_d - 2\hat{E}_d$ &  \\
        $\bar{E}_d$ & $2\hat{B}_d - 2\hat{E}_d$ & $2\hat{B}_d - 2\check{E}_d$ & $\check{E}_d \leq \hat{E}_d$ \\
        \hline
        \end{tabular}
        \vspace{1mm}
        \caption{Remaining Proof Cases for \Cref{g_convex}}
        \label{tab:cases}
    \end{table}

    $\check{E}_d \leq \hat{E}_d$ is an existing assumption, proving this case.
    The remaining cases are repetitive and uninsightful, so we outline them in \Cref{tab:cases}.
    If more than the mere rewriting of equal expressions is needed for completion, we list the assumptions used (asm.).
\end{proof}

\subsection{Optimized AllPointsCloser Algorithm}

\begin{algorithm}
    \caption{\textsc{AllPointsCloser} (optimized)}
    \label{fig:algo-v2}
    \begin{algorithmic}[1]
        \Statex \textbf{Input:} origin AABB $O$, evaluation AABB $E$, basis AABB $B$
        \Statex \textbf{Output:} \textbf{true} if all points in $O$ have nearest neighbor in $E$
        \Function{AllPointsCloser}{$O,E,B$}
            \State $s \gets 0$                         \Comment{running sum}
            \For{$i \gets 1$ to $r$}                   \Comment{loop over each dimension}
                \State $(o_{\min},o_{\max}) \gets O[i]$
                \State $(e_{\min},e_{\max}) \gets E[i]$
                \State $(b_{\min},b_{\max}) \gets B[i]$
                \State
                \If{$o_{\min} < b_{\min}$}
                    \State $d_{\min B} \gets (b_{\min}-o_{\min})^{2}$
                \ElsIf{$b_{\max} < o_{\min}$}
                    \State $d_{\min B} \gets (b_{\max}-o_{\min})^{2}$
                \Else
                    \State $d_{\min B} \gets 0$
                \EndIf
                \State
                \If{$o_{\max} < b_{\min}$}
                    \State $d_{\max B} \gets (b_{\min}-o_{\max})^{2}$
                \ElsIf{$b_{\max} < o_{\max}$}
                    \State $d_{\max B} \gets (b_{\max}-o_{\max})^{2}$
                \Else
                    \State $d_{\max B} \gets 0$
                \EndIf
                \State
                \State $d_{\min E} \gets \max\bigl((e_{\max}-o_{\min})^{2},\,(e_{\min}-o_{\min})^{2}\bigr)$
                \State $d_{\max E} \gets \max\bigl((e_{\max}-o_{\max})^{2},\,(e_{\min}-o_{\max})^{2}\bigr)$
                \State $s \gets s + \min\bigl(d_{\min B}-d_{\min E},\; d_{\max B}-d_{\max E}\bigr)$
            \EndFor
            \State \Return $(s > 0)$
        \EndFunction
    \end{algorithmic}
\end{algorithm}

We propose an optimized algorithm that runs in $O(R)$ time, shown in \Cref{fig:algo-v2}.
\Cref{fig:algo} runs in $O(R\cdot 2^R)$ time due to testing each of the $2^R$ corner points.
Our optimized code is derived by taking the original \Cref{fig:algo}, inlining the MinDist and MaxDist functions, and searching for which corner is most aligned to find an exception.
The unoptimized algorithm searches for one corner point of the origin where the MinDist to the basis is less than or equal to the MaxDist to the evaluation.
The optimized algorithm solves for each dimension independently by finding which direction, towards either the low or high bound for a given dimension, maximizes the difference between these two measures.
We then merge this search with the computation of a maximum $\MaxDist(c_x,E)-\MinDist(c_x,B)$ in the running sum.

Formally, the optimized check is:
\begin{multline}
\textsc{AllPointsCloser}(O, E, B) \iff \\
  0 < \sum_{d=1}^{r} \min\!\left\{
    \begin{array}{l}
      \gapLo{d},\\
      \gapHi{d}
    \end{array}
  \right\}
\end{multline}
where $\MinDist^2$ and $\MaxDist^2$ are applied per-dimension to the scalar bound endpoints $\check{O}_d$, $\hat{O}_d$ and the dimension intervals $B_d$, $E_d$.

The execution time of the optimized AllPointsCloser algorithm is shown in \Cref{tab:pruning_algo_times}.
The code was written in Rust and ran on an AMD Ryzen 3900X.
On the most commonly used numeric types and spatial dataset dimensionalities, our pruning method executes within 10ns (floating point and integers up to 64 bits; 2d to 4d).

\begin{table}[]
    \centering
    \begin{tabular}{l|rrrrrrr}
      \hline
    \textbf{Numeric Type} & \textbf{2d} & \textbf{3d} & \textbf{4d} & \textbf{8d} & \textbf{16d} & \textbf{24d} & \textbf{32d} \\ 
      \hline
    Int8 &   6 &   9 &  12 &  25 &  52 &  77 & 106 \\ 
      Int16 &   5 &   7 &   8 &  18 &  38 &  58 &  81 \\ 
      Int32 &   7 &   8 &  10 &  20 &  42 &  63 &  87 \\ 
      Int64 &   5 &   7 &  10 &  20 &  41 &  62 &  82 \\ 
      Int128 &  12 &  19 &  28 &  49 &  96 & 144 & 202 \\ 
      Float32 &   4 &   6 &   9 &  16 &  32 &  45 &  68 \\ 
      Float64 &   4 &   6 &   9 &  19 &  30 &  47 &  69 \\ 
      Rational64 &  62 &  97 & 126 & 244 & 491 & 710 & 955 \\ 
       \hline
    \end{tabular}
    \vspace{1mm}
    \caption{Optimized pruning test runtime in nanoseconds.} 
    \label{tab:pruning_algo_times}
\end{table}

\subsection{Generalized Application of AllPointsCloser}\label{sec:partial-order}

It is essential to consider the case when $k$ exceeds the size of a partition, when the bounds are not minimum, or when predicates are used, making the number of potential neighbors in a partition unknown.
AllPointsCloser constitutes a strict partial order relation `$<$' between bounds for each independent origin partition.
$$E < B \; |\; O  \iff 
    \forall o \in O : \forall e \in E : \forall b \in B : \dist(o, e) < \dist(o, b)
$$
We use this to denote that ``$E$ is closer than $B$ with respect to $O$''.
A proof is provided in \Cref{sec:partial-ordering}.
By ordering partitions in this way, we can construct a DAG between train partitions for each test/origin partition.
Many kNN approaches implicitly use this property with existing pruning mechanisms, such as R-Tree kNN search.
In the layout shown in \Cref{fig:direction_contradiction}, assuming all partitions have one point, kNN with $k=2$ only needs \textbf{P1} and \textbf{P2}.
Similarly, if an additional predicate is added to a scan, then partition row counts are only an upper bound, and the number of partitions needed is unbounded.
Partitions can then be searched along the DAG's topological order until $k$ is reached.

\subsubsection{Partial Ordering Proof}\label{sec:partial-ordering}

\begin{lemma}
All points proximity is irreflexive.

$$\lnot(A < A \;|\; O)$$
\end{lemma}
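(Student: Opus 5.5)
The plan is to argue by contradiction, choosing the same witness point for the evaluation and basis roles. Suppose $A < A \;|\; O$ holds, which by definition means
$$\forall o \in O : \forall e \in A : \forall b \in A : \dist(o, e) < \dist(o, b).$$
Since $e$ and $b$ range over the same set $A$, nothing prevents instantiating both with one point $a \in A$. That would give $\dist(o, a) < \dist(o, a)$, which contradicts irreflexivity of the strict order $<$ on $\mathbb{R}$.

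The steps are as follows.
\begin{enumerate}
\item Exhibit some $o \in O$. Since $O$ is well formed ($\check{O}_d \le \hat{O}_d$ for every $d$), any corner of $O$, for instance the point with coordinates $\check{O}_d$, lies in $O$.
\item Exhibit some $a \in A$ in the same way, using the well-formedness of $A$.
\item Specialize the universally quantified hypothesis to $o$, $e := a$ and $b := a$, obtaining $\dist(o,a) < \dist(o,a)$.
\item Conclude with $\lnot(x < x)$ for real $x$.
\end{enumerate}
I would not invoke the All-Points Proximity Theorem or the corner characterization of \Cref{app}; the definition of the relation suffices.

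The only step needing care is nonemptiness. If $O$ or $A$ were empty, the universal statement would be vacuously true and irreflexivity would fail. So the proof must explicitly use the standing assumption from the Definitions that every AABB is well formed, and hence contains at least its corners. Beyond this, the argument is a direct instantiation and involves no computation.
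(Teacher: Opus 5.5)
Your proof is correct and follows the paper's argument: the paper likewise unfolds the definition and observes that the quantified statement fails when the two points of $A$ coincide, giving $\dist(o,a) < \dist(o,a)$. Your explicit use of well-formedness to ensure that $O$ and $A$ are nonempty is a detail the paper leaves implicit, and it is genuinely needed, since for empty sets the relation would hold vacuously.
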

\begin{proof}
We rewrite the definition and prove a contradiction:
$$\forall o \in O : \forall a_1 \in A : \forall a_2 \in A : \dist(o, a_1) < \dist(o, a_2)$$

This statement is always false when $a_1=a_2$.
\end{proof}

\begin{lemma}
All points proximity is transitive.

$$
(A < B \;|\; O) \land (B < C \;|\; O) \implies (A < C \;|\; O)
$$
\end{lemma}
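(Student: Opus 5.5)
The plan is to unfold the definition of the relation and chain the two strict inequalities through a witness point of $B$. By the definition of `$<$' in \Cref{sec:partial-order}, the hypotheses read
\begin{gather*}
\forall o \in O : \forall a \in A : \forall b \in B : \dist(o, a) < \dist(o, b), \\
\forall o \in O : \forall b \in B : \forall c \in C : \dist(o, b) < \dist(o, c).
\end{gather*}
The goal is $\forall o \in O : \forall a \in A : \forall c \in C : \dist(o, a) < \dist(o, c)$.

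Fix arbitrary $o \in O$, $a \in A$, and $c \in C$. The key step is to choose some $b \in B$. Instantiating the first hypothesis at $(o, a, b)$ gives $\dist(o,a) < \dist(o,b)$. Instantiating the second at $(o, b, c)$ gives $\dist(o,b) < \dist(o,c)$. Transitivity of $<$ on $\mathbb{R}$ then yields $\dist(o,a) < \dist(o,c)$. Since $o$, $a$, $c$ were arbitrary, this is exactly $A < C \;|\; O$.

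The only step needing care, and the one I expect to be the real obstacle, is producing the witness $b$, i.e.\ showing that $B$ is nonempty. If $B$ were empty, both hypotheses would hold vacuously while $A < C \;|\; O$ could fail, so transitivity genuinely depends on nonemptiness.

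This is supplied by the Definitions subsection: every AABB is well formed, with $\check{B}_d \leq \hat{B}_d$ in each dimension $d$. Hence the low corner $(\check{B}_1, \dots, \check{B}_r)$ lies in $B$ under the inclusive-bounds membership, and I would take $b$ to be that corner (equivalently, any element of $\Corners(B)$). I would state this dependence on well-formedness explicitly, mirroring how the irreflexivity lemma implicitly uses that $A$ is nonempty to choose $a_1 = a_2$.

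Together with irreflexivity, this transitivity lemma establishes that `$<$' is a strict partial order for each fixed $O$.
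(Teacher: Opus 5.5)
Your proof is correct and matches the paper's: unfold both hypotheses, fix $o \in O$, $a \in A$, $c \in C$, choose some $b \in B$, and chain $\dist(o,a) < \dist(o,b) < \dist(o,c)$. You also justify that $B$ is nonempty using well-formedness ($\check{B}_d \leq \hat{B}_d$), which fills in a step the paper's ``choose any $b \in B$'' leaves tacit, and you are right that transitivity would fail without it.
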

\begin{proof}
Assume $(A < B \;|\; O)$ and $(B < C \;|\; O)$.
\begin{align*}
&\forall o \in O : \forall a \in A : \forall b \in B : \dist(o,a) < \dist(o,b)\\
&\forall o \in O : \forall b \in B : \forall c \in C : \dist(o,b) < \dist(o,c)
\end{align*}
We want to show $\forall o \in O : \forall a \in A : \forall c \in C : \dist(o,a) < \dist(o,c)$.

Fix any $o \in O$, $a \in A$, and $c \in C$. Since $B$ is the set we compare to in the second assumption, let us choose any $b \in B$. Then by the first assumption we have
\[
\dist(o,a) < \dist(o,b),
\]
and by the second assumption we have
\[
\dist(o,b) < \dist(o,c).
\]
By the transitivity of the usual order of the real numbers,
\[
\dist(o,a) < \dist(o,c).
\]
Since $o$, $a$, and $c$ were fixed arbitrarily from a universal quantifier, this shows that $(A < C \;|\; O)$ holds, proving transitivity.
\end{proof}

\section{Conclusion}
Data warehouses are becoming the predominant architecture for querying large datasets.
Rather than maintaining persistent indexes, they use columnar partition distribution statistics to accelerate lookups and joins.
AkNN joins can benefit from partition-bounds-only pruning; however, existing pruning methods were designed for in-index pruning and are suboptimal for this use case.

We present the AllPointsCloser algorithm to determine if every point in one partition is closer to every point in another partition for all points in an origin partition.
We prove our three-bound test is optimal; if one partition is always closer, then AllPointsCloser proves it.
Additionally, AllPointsCloser is not subject to the restrictions of MinMaxDist~\cite{roussopoulos_nearest_1995} or NXNDist~\cite{4221754}, which are not applicable when $k>1$, bounds are not minimum bounds, or when using additional predicates in a query.
This makes AllPointsCloser a simple, optimal, and pragmatic improvement for spatial AkNN joins across partitioned datasets, regardless of the subsequent join algorithm.

\paragraph{Availability}
A Rust implementation and Lean 4 formal proof are available at \url{https://github.com/dominikWin/apc-aknn} and in the arXiv ancillary files.

\bibliographystyle{ACM-Reference-Format}
\bibliography{main}

\end{document}